\newtheorem{lemma}{\bf Lemma}
\newtheorem{theorem}{\bf Theorem}
\newtheorem{assumption}{\bf Assumption}
\newtheorem{corollary}{\bf Corollary}
\acrodef{LTI}{linear-time-invariant}
\acrodef{LFT}{linear fractional transformation}
\acrodef{SDP}{semidefinite program}
\acrodef{QMI}{quadratic matrix inequality}
\pgfplotsset{compat=1.18}
\title{\LARGE \bf
Data-driven Estimator Synthesis with Instantaneous Noise
}
\author{Felix Brändle and Frank Allgöwer
\thanks{F. Allg\"ower is thankful that this work was funded by the Deutsche Forschungsgemeinschaft (DFG, German Research Foundation) under Germany’s Excellence Strategy -- EXC 2075 -- 390740016 and within grant AL
	316/15-1 – 468094890. F. Br\"andle thanks the International Max Planck Research School for Intelligent Systems (IMPRS-IS) for supporting him.
	}
\thanks{F. Br\"andle and F. Allg\"ower are with the University of Stuttgart, Institute for Systems
		Theory and Automatic Control, 70550 Stuttgart,
		Germany. (e-mail: \{felix.braendle, frank.allgower\}@ist.uni-stuttgart.de)}%
}
\begin{document}
	
\maketitle
\thispagestyle{empty}
\pagestyle{empty}

\begin{abstract}
Data-driven controller design based on data informativity has gained popularity due to its straightforward applicability, while providing rigorous guarantees.
However, applying this framework to the estimator synthesis problem introduces technical challenges, which can only be solved so far by adding restrictive assumptions.
In this work, we remove these restrictions to improve performance guarantees.
Moreover, our parameterization allows the integration of additional structural knowledge, such as bounds on parameters.
Our findings are validated using numerical examples.
\end{abstract}

\section{Introduction}
In recent years, data-driven control methods have increased in popularity \cite{Willems2005,Berberich2020, Coulson2019,Martin2023a}.
Model-based techniques necessitate an accurate model, which can be difficult to obtain and often requires expert knowledge.
In contrast, data-driven methods bypass this step by directly synthesizing a controller from data, eliminating the need for a model.

A common approach is set-membership estimation \cite{Cerone2012}.
Instead of identifying a single model, this method identifies the set of all systems, that could have generated the available data.
Since the true system must lie within this set, guarantees provided for all systems in this set must also hold for the true system.
However, the complexity of this method grows rapidly with increasing number of samples, necessitating approximation techniques \cite{Castano2011}.
To address this issue, an established alternative is the data-informativity approach \cite{Waarde2020a, Waarde2022}.
This framework considers a linear regression model
\begin{equation}
	Y = \theta X + W
\end{equation}
with known regressor $X$, and regressand $Y$, and unknown parameter matrix $\theta$, and disturbance $W$.
When multiple samples $X$ and $Y$ are collected with $W$ being independent between each sample, this is called the instantaneous noise \cite{Waarde2022}.
Dependence between samples may arise, if the data is generated by a dynamical system and a combined bound on energy of the disturbance $W$ is assumed \cite{Braendle2024}.
Under some general assumptions, it is possible to recreate a tight description of all possible $\theta$ consistent with the data by a \ac{QMI} 
\begin{equation}
	\begin{bmatrix}
		\theta^\top \\ I
	\end{bmatrix}^\top
	\Phi
	\begin{bmatrix}
		\theta^\top \\ I
	\end{bmatrix} \succeq 0. \label{eq:Intro:QMI}
\end{equation}
Robust control offers many methods to synthesize robust state feedback controllers for such uncertainty descriptions in $\theta^\top$ by solving a \ac{SDP} \cite{Scherer2000}.

Next to state feedback controllers, observer and estimator synthesis is also an important problem class.
For example, moving horizon estimation estimates the state by repeatedly solving an optimization problem.
In order to incorporate data, the system dynamics are replaced by Hankel-matrices and additional regularization terms are added to account for disturbances in the data collection \cite{Wolff2022}.
Estimators based on Luenberger observers can also be employed.
However, this requires the true state transition matrix, which is not available if the data is affected by disturbances  \cite{Mishra2022}. 
In \cite{Liu2023a}, the authors address this by including an approximation of the true state transition matrix and additional regularization to synthesize a stabilizing LQG-controller.

In this work, we consider a different approach by deriving a \ac{LFT}-formulation to apply robust control methods.
This allows us to reformulate the estimator design in terms of designing a general dynamic output feedback controller not requiring an approximation of the true state transition matrix anymore.
Moreover, we can optimize the $\mathcal{H}_\infty$-norm to minimize the effects of any disturbance on the estimation.
To do so, we consider a similar setup as in the data-informativity framework with instantaneous noise.
As our main contribution, we derive a new \ac{QMI}-based parameterization as in \eqref{eq:Intro:QMI}, but for $\theta$ instead of $\theta^\top$ to synthesize an estimator.
This leads to a convex \ac{SDP}, which can be solved efficiently and at the same time allow for flexible uncertainty characterizations to reduce conservatism.
Furthermore, we can impose additional structural knowledge, such as bounds on individual parameter ranges.
In addition, we compare our parameterization to the data-informativity framework, highlighting benefits of each method.
Finally, we validate our findings through numerical experiments.

\emph{Notation}:
We denote the $n\times n$ identity matrix and the $p\times q$ zero matrix as $I_n$ and $0_{p\times q}$, respectively, where we omit the indices if the dimensions are clear from context. 
For the sake of space limitations, we use $[\star] $, if the corresponding matrix can be inferred from symmetry of the whole expression. 
Moreover, we use $P\succ 0$ ($P\succeq 0$), if the symmetric matrix $P$ is positive (semi-) definite. 
Negative (semi-) definiteness is denoted by $P\prec 0$ ($P\preceq 0$). 
We write $\sigma_{\mathrm{min}}(A)$ and $\sigma_{\mathrm{max}}(A)$ for the minimal and maximal singular value of $A$. 
Furthermore, we use $\mathrm{diag}(A_1,\ldots,A_k)$ to abbreviate a block diagonal matrix with $A_i\,i=1,\ldots,k$ as blocks.
Moreover, if $A$ has full rank, its Moore-Penrose inverse is denoted by $A^\dagger$.
We write $A^\perp$ for a matrix, which rows form an orthonormal basis of the kernel of $A$, i.e., $AA^{\perp\top} = 0$.

\section{Set-membership estimation} \label{sec:MainTheorem}
We consider a linear regression model of the form
\begin{equation}
	Y_k = \theta_{\mathrm{tr}}X_k + W_k \label{eq:Main:DataEq}
\end{equation}
with regressand $Y_k\in\mathbb{R}^{p\times m}$, regressor $X_k\in\mathbb{R}^{n\times m}$, unknown disturbance $W_k\in\mathbb{R}^{p\times m}$, and unknown parameter matrix $\theta_{\mathrm{tr}}\in\mathbb{R}^{p\times n}$.
We sample $N$ measurements $\{X_k,Y_k\}_{k=1}^{N}$ from this model affected by an unknown disturbance $W_k\in\mathcal{W}$ with
\begin{equation}
	\mathcal{W}\in\left\{W\in\mathbb{R}^{p\times m} \mid [\star]^\top \begin{bmatrix} -Q & 0 \\0 & R\end{bmatrix}\begin{bmatrix}	W -\bar{W}\\ I	\end{bmatrix} \succeq 0\right\} \label{eq:main:WSet},
\end{equation}
$0\preceq Q\in\mathbb{R}^{p \times p}$ and $0\preceq R\in\mathbb{R}^{m \times m}$.
Without loss of generality, we set  $\bar{W}$ to zero, as this offset can be absorbed in $Y_k$.
The overall goal is to find $N$ sets 
\begin{equation*}
	\Sigma_k = \left\{\theta\in\mathbb{R}^{p\times n} \mid \exists W\in\mathcal{W}:\: Y_k = \theta X_k + W \right\}
\end{equation*}
to derive the set of all $\theta$ consistent with the collected data by $\bigcap_{k=1}^{N} \Sigma_k$.
We represent each set $\Sigma_k$ by a \ac{QMI} in $\theta$. 
The intersection can be approximated by applying the S-Lemma.
To this end, we first derive a \ac{QMI}-based description of $\Sigma_k$.

\subsection{Single Data Point}
In this subsection, we consider a single data sample $\{X_1,Y_1\}$ with $X_1$ having full column rank.
For simplicity, we omit the index $1$.
Given the regression model 
\begin{equation}
	Y = \theta X + W, 
\end{equation}
we express $W$ in terms of $\theta$ and insert it in $\mathcal{W}$, leading to
\begin{equation*}
	\Sigma = \{ \theta \in\mathbb{R}^{p \times n} \mid 
	[\star]^\top
	\begin{bmatrix}
		-Q & 0 \\
		0 & R 
	\end{bmatrix}
	\begin{bmatrix} 
		\theta X - Y \\ 
		I	
	\end{bmatrix}
	\succeq 0\}.
\end{equation*}
This results in an unbounded set, since any $\theta$ satisfying $\theta X = 0$ is also contained in $\Sigma$.
To apply set-membership methods, we reformulate $\Sigma$ using a different \ac{QMI}.
The objective is to eliminate $X$ to obtain a description in terms of $\theta$ rather than $\theta X$.
We consider the parameterization
\begin{align*}
	&\hat{\Sigma} = \{\theta \in\mathbb{R}^{p \times n} \mid \\
	&[\star]^\top
	\begin{bmatrix}
		-Q & 0\\
		0  & G^\top(R+\hat{R})G + G_0^\top \hat{Q} G_0
	\end{bmatrix}
	\begin{bmatrix} 
		\theta - \hat{\theta} \\ 
		I 
	\end{bmatrix} \succeq 0
	\}
\end{align*}
with $G=X^\dagger$ and $G_0=(I-XG)$, where the matrices $0\prec \hat{R}\in\mathbb{R}^{m \times m}$, $0\preceq\hat{Q}\in\mathbb{R}^{n\times n}$, and $\hat{\theta}\in\mathbb{R}^{p \times n}$ are decision variables.
The set $\hat{\Sigma}$ is described by a \ac{QMI} centered around $\hat{\theta}$.
We note that $X$ has full column rank, allowing us to compute a left inverse $G=X^\dagger$ with $G X = I$.
The main idea is to apply the identity
\begin{equation}
	\begin{bmatrix}	\theta X \\ I\end{bmatrix} = 
	\begin{bmatrix}	\theta \\ G\end{bmatrix} X.
\end{equation}
on $\Sigma$ and relax the condition on the column space of $X$ to the whole $\mathbb{R}^n$, thus isolating $\theta$.
A simple multiplication yields the term $G^\top R G$ in $\hat{\Sigma}$.
To account the additional effects of $-\theta Q \theta$, we add the compensatory term $G^\top \hat{R} G + G_0^\top \hat{Q} G_0$.
The separation into $G$ and $G_0$ distinguishes between the column space of $X$ and the remaining $\mathbb{R}^n$.
This follows from the Moore-Penrose-Inverse as $G_0$ is a projection matrix on the complementary space of $X$.
\begin{assumption}\label{ass:main:prior}
	$\theta_{\mathrm{tr}}\in\Sigma_0$ with
	\begin{equation*}
		\Sigma_0 = \left\{ \theta \in\mathbb{R}^{p \times n} \mid 
		\begin{bmatrix} 
			\theta\\ 
			I	
		\end{bmatrix}^\top
		\Phi_{0,i}
		\begin{bmatrix} 
			\theta\\ 
			I	
		\end{bmatrix}
		\succeq 0 \quad \forall i = 1,\ldots,N_0 \right\}.
	\end{equation*}
\end{assumption}
Assumption\,\ref{ass:main:prior} restricts the $\theta$ we need to consider to prevent $\theta Q \theta$ from growing arbitrary large.
This assumption is quite restrictive, since it requires additional knowledge, which may not always be available. 
That is why, we present a method for constructing the prior $\Sigma_0$ in Section\,\ref{subsec:MultPoints}.
With this, we can state the main theorem of this paper and how to select $\hat{R}$, $\hat{Q}$, and $\hat{\theta}$ appropriately.
\begin{theorem} \label{Theo:main:MainTheo}
	Suppose $X\in\mathbb{R}^{n \times m}$ has full column rank and $\bar{\theta}\in\mathbb{R}^{p \times n}$ is an arbitrary matrix.
	If $0\prec \hat{R}\in\mathbb{R}^{m \times m}$ and $0\preceq\hat{Q}\in\mathbb{R}^{n\times n}$ satisfy
	\begin{equation}
		R \preceq \gamma^2 \hat{R} \label{eq:Main:RhatTheorem},
	\end{equation}
	\begin{equation}
		[\star]^\top 
		\begin{bmatrix}
			-(Q+\gamma^2Q) & 0 \\
			0 & \hat{Q}
		\end{bmatrix}
		\begin{bmatrix} 
			\theta - \bar{\theta} \\ 
			I 
		\end{bmatrix}
		G_0 \succeq 0 \quad \forall \theta \in\Sigma_0  \label{eq:Main:QhatTheorem}
	\end{equation}
	with $\hat{\theta}=YG + \bar{\theta}G_0$, then
	\begin{enumerate}[i)]
		\item $\theta \in (\Sigma \cap \Sigma_0) \implies \theta \in \hat{\Sigma}$, \label{item:Main:TheoSetInt}
		\item \label{item:Main:TheoNotSet} $
		[\star]^\top
		\begin{bmatrix}
			-Q & 0 \\
			0 & R + \hat{R}
		\end{bmatrix}
		\begin{bmatrix} 
			\theta X - Y \\ 
			I	
		\end{bmatrix}
		\nsucceq 0 \implies \theta\not\in\hat{\Sigma},
		$
		\item $\lim\limits_{\substack{\sigma_\mathrm{min}(\hat{Q})\to \infty \\ \sigma_\mathrm{max}(\hat{R})\to 0}} \hat{\Sigma} = \Sigma$. \label{item:Main:TheoConvergence}
	\end{enumerate}
\end{theorem}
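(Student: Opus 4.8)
The plan is to prove the three claims by directly manipulating the quadratic matrix inequality (QMI) that defines $\hat\Sigma$ and relating it to the one defining $\Sigma$. Throughout I would use the abbreviations $G = X^\dagger$ and $G_0 = I - XG$, together with the projection identities $GX = I$, $G G_0^\top = 0$ (since $G_0 X = 0$ forces $G_0$ to annihilate the column space of $X$, while $G$ acts on it), $G_0 = G_0^\top = G_0^2$, and $XG + G_0 = I$. The key algebraic observation, already signposted in the text, is the decomposition
\begin{equation*}
	\theta - \hat\theta = (\theta X - Y)G + (\theta - \bar\theta)G_0,
\end{equation*}
which follows from $\hat\theta = YG + \bar\theta G_0$ and $\theta = \theta X G + \theta G_0$. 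Because the two summands have ranges in complementary subspaces ($\mathrm{im}\,G^\top$ versus $\mathrm{im}\,G_0$), the cross terms drop out of any symmetric quadratic form built from them, so a quadratic form in $\theta - \hat\theta$ splits cleanly into a ``$G$-part'' and a ``$G_0$-part''.

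**Part \ref{item:Main:TheoSetInt}.** I would take $\theta \in \Sigma \cap \Sigma_0$ and expand the defining form of $\hat\Sigma$. The $-Q$ block contributes $-(\theta-\hat\theta)Q(\theta-\hat\theta)^\top$; I would split this using the decomposition above into three pieces, a pure $G$-piece $-(\theta X - Y)GQG^\top(\theta X - Y)^\top$, a pure $G_0$-piece $-(\theta-\bar\theta)G_0 Q G_0(\theta-\bar\theta)^\top$, and a cross piece. The point is that since $\theta\in\Sigma$ gives $(\theta X - Y)Q(\theta X - Y)^\top \preceq R$ (in the appropriate sense via the $\Sigma$-QMI), and the added term $G^\top(R+\hat R)G$ together with $R \preceq \gamma^2\hat R$ provides enough slack to dominate both the relaxed $G$-contribution and — via \eqref{eq:Main:QhatTheorem} applied since $\theta\in\Sigma_0$ — the $G_0$-contribution $G_0^\top\hat Q G_0$ must beat $-(\theta-\bar\theta)G_0(Q+\gamma^2 Q)G_0(\theta-\bar\theta)^\top$. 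Assembling these bounds, with the $\gamma^2$ factors chosen to absorb the cross terms via a Young-type inequality $\pm(AB^\top + BA^\top)\preceq \gamma^{-2}AA^\top + \gamma^2 BB^\top$, yields nonnegativity of the $\hat\Sigma$-form. The main obstacle here is bookkeeping the cross terms correctly and verifying that the single scalar $\gamma$ simultaneously reconciles \eqref{eq:Main:RhatTheorem} and \eqref{eq:Main:QhatTheorem}; I expect this completion-of-squares step to be the technical heart.

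**Parts \ref{item:Main:TheoNotSet} and \ref{item:Main:TheoConvergence}.** For \ref{item:Main:TheoNotSet} I would argue the contrapositive: assuming $\theta\in\hat\Sigma$, use the decomposition to write the $\hat\Sigma$-form as the sum of the $G$-part — which, after multiplying out $G^\top(R+\hat R)G$ and using $XGX = X$, reproduces exactly $(\theta X - Y)$-form with weight $R+\hat R$ — plus the manifestly positive-semidefinite $G_0$-part $(\theta-\bar\theta)G_0^\top\hat Q G_0(\theta-\bar\theta)^\top \succeq 0$ minus the $-Q$ contribution split accordingly; dropping the nonnegative $G_0$ term shows the stated $(\theta X - Y)$-QMI must hold, which is the contrapositive of \ref{item:Main:TheoNotSet}. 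Finally, for \ref{item:Main:TheoConvergence} the inclusion $\hat\Sigma\subseteq\{(\theta X - Y)\text{-QMI with }R+\hat R\}$ from \ref{item:Main:TheoNotSet} gives, as $\sigma_{\max}(\hat R)\to 0$, convergence to the $\Sigma$ description from outside; the reverse inclusion $\Sigma\cap\Sigma_0\subseteq\hat\Sigma$ from \ref{item:Main:TheoSetInt} needs $\hat Q$ large enough that \eqref{eq:Main:QhatTheorem} holds, and as $\sigma_{\min}(\hat Q)\to\infty$ the $G_0$-constraint becomes vacuous on any bounded prior while the $G$-part collapses onto $\Sigma$, so $\hat\Sigma \to \Sigma$ by squeezing. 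I would close by noting $\hat R\succ 0$ is what makes the limiting weight $R+\hat R$ approach $R$ monotonically and keeps $\hat\Sigma$ well defined throughout.
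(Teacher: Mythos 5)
Your overall route is the paper's: decompose $\theta-\hat{\theta}=(\theta X-Y)G+(\theta-\bar{\theta})G_0$, control the $G$-part with $\theta\in\Sigma$ and \eqref{eq:Main:RhatTheorem}, the $G_0$-part with \eqref{eq:Main:QhatTheorem}, and absorb the coupling with a $\gamma$-weighted Young/completion-of-squares step; part \ref{item:Main:TheoNotSet}) by restricting to the column space of $X$. Two points need fixing, however. First, your opening claim that ``the cross terms drop out of any symmetric quadratic form built from them'' is false: the form in $\hat{\Sigma}$ is $(\theta-\hat{\theta})^\top Q(\theta-\hat{\theta})$ with the weight $Q$ in the \emph{middle}, so the cross term $G^\top(\theta X-Y)^\top Q(\theta-\bar{\theta})G_0$ is not killed by $GG_0=0$ and is generically nonzero. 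You implicitly concede this when you later invoke the Young-type inequality with parameter $\gamma$ in part \ref{item:Main:TheoSetInt}) --- that step (which is exactly where \eqref{eq:Main:RhatTheorem} enters, via $\epsilon=\gamma^{-2}$, and why \eqref{eq:Main:QhatTheorem} carries the weight $Q+\gamma^2Q$) is not optional bookkeeping but the load-bearing part of the argument; the blanket ``cross terms vanish'' statement should be deleted. The same issue infects your sketch of part \ref{item:Main:TheoNotSet}): you cannot simply ``drop the nonnegative $G_0$ term'' from the full decomposition, because the $-Q$-weighted $G_0$-square and the cross terms are not sign-definite. The clean (and the paper's) argument is a congruence with $X$: multiplying the $\hat{\Sigma}$-QMI by $X^\top$ and $X$ annihilates every term containing $G_0$ (since $G_0X=0$) and turns $G^\top(R+\hat{R})G$ into $R+\hat{R}$, giving the stated implication directly.

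Second, your proof of part \ref{item:Main:TheoConvergence}) has a genuine gap in the inner inclusion. Part \ref{item:Main:TheoSetInt}) only yields $\Sigma\cap\Sigma_0\subseteq\hat{\Sigma}$, whereas the claim is $\hat{\Sigma}\to\Sigma$; squeezing against $\Sigma\cap\Sigma_0$ would at best give convergence to $\Sigma\cap\Sigma_0$. The paper closes this by rerunning the part-\ref{item:Main:TheoSetInt}) estimate for an \emph{arbitrary} fixed $\theta\in\Sigma$ (not assumed to lie in $\Sigma_0$), which reduces membership in $\hat{\Sigma}$ to $\sigma_{\mathrm{min}}(\hat{Q})\geq\sigma_{\mathrm{max}}(Q+\gamma^2Q)\,\sigma_{\mathrm{max}}(\theta-\hat{\theta})^2$ --- a condition met for $\hat{Q}$ large enough depending on $\theta$, with no reference to \eqref{eq:Main:QhatTheorem} or to the prior. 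You should make this replacement explicit rather than appealing to ``the reverse inclusion from \ref{item:Main:TheoSetInt})''. The outer half of your limit argument (via part \ref{item:Main:TheoNotSet}) and $\sigma_{\mathrm{max}}(\hat{R})\to0$) is fine. Minor notational slips (quadratic forms written as $(\theta-\hat{\theta})Q(\theta-\hat{\theta})^\top$, which is dimensionally inconsistent) should also be corrected.
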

\begin{proof}
	\ref{item:Main:TheoSetInt}) 
	First, we take an arbitrary vector $v$ and parameterize it according to $v=v_1 + v_2$ with $v_1 = XG v$ and $v_2=G_0v$.
	Since $X$ has full column rank, the Moore-Penrose inverse satisfies $GX=I$, $G_0X=0$, and $G_0G_0=G_0$.
	Moreover by, employing $v_1 = XGv$ and $\theta\in\Sigma$, we conclude
	\begin{equation*}
		v_1^\top	
		\begin{bmatrix} 
			\theta  - YG \\ 
			G	
		\end{bmatrix}^\top
		\begin{bmatrix}
			-Q & 0 \\
			0 &  R 
		\end{bmatrix}
		\begin{bmatrix} 
			\theta  - YG \\ 
			G	
		\end{bmatrix} v_1 \geq 0. 
	\end{equation*}
	Next, we use quadratic completion, $XGv_1=v_1$ and $\hat{R}\succ 0$ to imply invertibility and  existence of a positive definite matrix square root $\hat{R}^{\tfrac{1}{2}}$ such that
	\begin{equation*}
		\begin{split}
			&\|\hat{R}^{\tfrac{1}{2}}Gv_1 - \hat{R}^{-\tfrac{1}{2}}X^\top (\theta - \hat{\theta})^\top Q (\theta - \hat{\theta}) v_2 \|^2  \\
			&=v_1^\top G^\top \hat{R} G v_1 -2 v_1^\top (\theta - \hat{\theta})^\top Q (\theta - \hat{\theta}) v_2 \\ 
			&+v_2^\top(\theta - \hat{\theta})^\top Q (\theta - \hat{\theta})X\hat{R}^{-1}X^\top (\theta - \hat{\theta})^\top Q (\theta - \hat{\theta}) v_2.
		\end{split}
	\end{equation*}
	
	In order to show satisfaction of the \ac{QMI} in $\hat{\Sigma}$, we multiply an arbitrary vector $v$ and its transpose from the left and right, respectively, use $GG_0=0$ and lower bound it
	\begin{equation*}
		\begin{split}		
			&[\star]^\top
			\begin{bmatrix}
				-Q & 0 \\
				0 &  R 
			\end{bmatrix}
			\begin{bmatrix} 
				\theta  - YG \\ 
				G	
			\end{bmatrix} v_1 +
			v_1^\top G^\top \hat{R} G v_1+ 
			v_2^\top \hat{Q} v_2 \\
			&-2 v_1^\top (\theta-\hat{\theta})^\top Q(\theta-\hat{\theta})v_2
			- v_2^\top (\theta-\hat{\theta})^\top Q(\theta-\hat{\theta})v_2\\
			&\geq v_2^\top \hat{Q} v_2 - v_2^\top (\theta-\hat{\theta})^\top Q(\theta-\hat{\theta})v_2 \\
			&-v_2^\top(\theta - \hat{\theta})^\top Q (\theta - \hat{\theta})X\hat{R}^{-1}X^\top (\theta - \hat{\theta})^\top Q (\theta - \hat{\theta}) v_2.
		\end{split}
	\end{equation*}
	Next, we employ
	\begin{align*}
		&\gamma^2 \hat{R} \succeq R \succeq (\theta X - Y)^\top Q (\theta X - Y) \\
		\implies &\gamma ^2 I \succeq \hat{R}^{-\tfrac{1}{2}} (\theta X - Y)^\top Q (\theta X - Y) \hat{R}^{-\tfrac{1}{2}}
	\end{align*}
	to upper bound $\sigma_{\mathrm{max}}(\hat{R}^{-\tfrac{1}{2}} (\theta X - Y)^\top Q (\theta X - Y) \hat{R}^{-\tfrac{1}{2}})$.
	Using $\sigma_{\mathrm{max}}(A^\top A) = \sigma_{\mathrm{max}}(AA^\top)$ and $\hat{\theta}X = Y$, this yields
	\begin{align*}
		&-v_2^\top(\theta - \hat{\theta})^\top Q (\theta - \hat{\theta})X\hat{R}^{-1}X^\top (\theta - \hat{\theta})^\top Q (\theta - \hat{\theta})\\
		& \geq -\gamma^2 v_2^\top (\theta-\hat{\theta})^\top Q(\theta-\hat{\theta})v_2
	\end{align*}
	Lastly, employing this bound and $\hat{\theta}G_0=\bar{\theta}G_0$ yields
	\begin{equation*}
		v^\top G_0^\top (\hat{Q} - (\theta - \bar{\theta})^\top(Q+\gamma^2 Q) (\theta - \bar{\theta}))G_0 v \geq 0,
	\end{equation*}
	 which is satisfied by \eqref{eq:Main:QhatTheorem}. Hence, we showed that $\theta\in\hat{\Sigma}$.

	\ref{item:Main:TheoNotSet}) This follows from multiplying the \ac{QMI} for $\hat{\Sigma}$ with $X^\top$, and $X$ respectively, $GX = I$ and $G_0 X = 0$.
	
	\ref{item:Main:TheoConvergence}) First, we consider $\theta\not\in\Sigma$. 
	Statement \ref{item:Main:TheoNotSet}) and $\hat{R}\to0$ imply $\theta\not\in\hat{\Sigma}$ .
	Next, we consider $\theta\in\Sigma$ for a fixed $\hat{R}$. 
	Following the same steps as in \ref{item:Main:TheoSetInt}) leads to
	\begin{align*}
		& v_2^\top (\hat{Q} -  (\theta-\hat{\theta})^\top(Q+\gamma^2 Q)(\theta-\hat{\theta})) v_2 \\
		\geq  &\, (\sigma_{\mathrm{min}}(\hat{Q}) - \sigma_{\mathrm{max}}(Q + \gamma^2 Q)\sigma_{\mathrm{max}}(\theta-\hat{\theta})^2) \|v_2\|^2\geq 0,
	\end{align*}
	which is satisfied for $\sigma_{\mathrm{min}} (\hat{Q})\geq  \sigma_{\mathrm{max}}(Q + \gamma^2 Q)\sigma_{\mathrm{max}}(\theta-\hat{\theta})^2$. 
	Hence, for any $\theta\in\Sigma$ and $\hat{R}$, we find a sufficiently large $\hat{Q}$ such that $\theta\in\hat{\Sigma}$.	
\end{proof}

Note that even for the simple case, where $\theta$ is a column vector, finding the smallest ellipsoid containing $\Sigma \cap \Sigma_0$ is NP-complete \cite[Section 3.7.2]{Boyd1994}.
Thus, the previous Theorem establishes an approximate representation of $\Sigma$ using a \ac{QMI}.
Computing the intersection is performed as a second independent step using the S-Lemma.

For the special case, where $X$ is a square invertible matrix, we deduce $G_0= 0$, such that $\hat{Q}$ can be chosen arbitrarily.
Then $\Sigma\subset\hat{\Sigma}$ follows from $\hat{R}\succ 0$.
How to derive a suitable parameterization if $X$ has full row rank is addressed in \cite{Braendle2024}.

Theorem\,\ref{Theo:main:MainTheo} introduces $\hat{R}$, $\hat{Q}$, and $\bar{\theta}$ as decision variables.
To provide an intuition on how to interpret them, Fig.\,\ref{fig:main:TheoVisual} illustrates Theorem\,\ref{Theo:main:MainTheo} for a two-dimensional case with $\theta\in\mathbb{R}^{1 \times 2}$ and $Q=1$.
The unbounded set $\Sigma$ is shown in light gray.
The goal is to find an ellipse, that contains $\Sigma_{0} \cap \Sigma$, while not excessively restricting $\theta X$.
To this end, $G^\top\hat{R}G+G_0^\top\hat{Q}G_0$ is introduced with $G$ and $G_0$ isolating the column space of $X$ and its complement in $\mathbb{R}^n$, respectively. 
To gain a more intuitive understanding, each term can be interpreted as the radius of a ball in the corresponding vector space.
$\hat{Q}$ is the radius of the unbounded component of $\theta$, i.e., any part of $\theta$ satisfying $\theta X = 0$.
Now, $\hat{Q}$ must be chosen sufficiently large to contain $\Sigma_{0} \cap \Sigma$, which is generalized in \eqref{eq:Main:QhatTheorem} to achieve \ref{item:Main:TheoSetInt}).
$\hat{R}$ on the other hand increases the radius for the row space of $\theta$ that coincides with $X$, thereby increasing the bounded component.
This means that there may exist $\theta\in\hat{\Sigma}$ even though $\theta\not\in\Sigma$.
This additional radius is used to find an outer approximation of $\Sigma_{0} \cap \Sigma$ and must be chosen as a trade-off. 
Choosing it very small, results in a tight description close to $\Sigma$, but doing so requires a larger $\hat{Q}$.
This can lead to poor conditioned matrices and can affect the approximation of the intersection.

\begin{figure}[t!]
	\centering
	\begin{tikzpicture}[scale=0.6]
	
	\newcommand{\lineann}[4][0.5]{%
		\begin{scope}[rotate=#2, inner sep=2pt]
			\draw[dashed] (0,0) -- +(0,#1)
			node [coordinate, near end] (a) {};
			\draw[dashed] (#3,0) -- +(0,#1)
			node [coordinate, near end] (b) {};
			\draw[|<->|] (a) -- node[rotate = #2, fill=white] {#4} (b);
		\end{scope}
	}
	
	\newcommand{\lineannShort}[4][0.5]{%
		\begin{scope}[rotate=#2, inner sep=2pt]
			\coordinate (a) at (0,0);	
			\coordinate (b) at (#3,0);	
			\draw[dashed] (b) -- +(0,#1);
			\draw[<->] (a) -- node[rotate = #2, above,align=center] {#4} (b);
		\end{scope}
	}

	\clip (-5.1,-4) rectangle (5,3.2);
	
	\node at (0,0) [above right]{$\hat{\theta}$};
	\fill (0,0) circle(1.5pt);
	\filldraw[draw = black, fill opacity=0.2, fill=gray] (-10,-2) rectangle ++(20,4);
	\node at (4.0,2.0) [above] {$\Sigma$};
	
	\draw[ decoration={brace,	mirror,	raise=1	},decorate] (-0.0,-0)--(-0.0,-2);
	\node at (-1.2,-1) [] {\small{$G^\top R G$}};

	\draw[ decoration={brace,	mirror,	raise=1	},decorate] (0,0)--(4.5,0);
	\node at (2.25,-0.1) [below] {\small{$G_0^\top \hat{Q} G_0$}};
	
	
	\draw[->] (-4.8,0) -- (4.8,0);
	\draw [->] (0,-3.0)-- (0,3.0);
	
	\draw (1.3,-2.5) circle(1.5);
	\node at (2.5,-4) [above right]{$\Sigma_{0}$};
	
	\draw (0,0) ellipse(4.5 and 2.5);
	\node at (0.5,2.4) [above right]{$\hat{\Sigma}$};
	
	\draw[ decoration={brace,	mirror,	raise=1	},decorate] (-0,2.5) --  (0,0);
	\node at (-1.9,1.25) [] {\small{$G^\top(R+\hat{R})G$}};
	\node at (-1.2,-1) [] {\small{$G^\top R G$}};
	
\end{tikzpicture}
	\caption{Visualization of $\Sigma$, $\Sigma_{0}$ and $\hat{\Sigma}$.} \label{fig:main:TheoVisual}
\end{figure}

Statement \ref{item:Main:TheoConvergence}) of Theorem\,\ref{Theo:main:MainTheo} provides convergence guarantees for the limits of $\hat{R}$ and $\hat{Q}$.
As illustrated in Fig.\,\ref{fig:main:TheoVisual}, choosing $\hat{R}$ sufficiently small and $\hat{Q}$ sufficiently large, ensures every $\theta\in\Sigma$ is also contained in $\hat{\Sigma}$.

Finally, $\bar{\theta}$ can be chosen arbitrarily.
However, as shown in Fig.\,\ref{fig:main:TheoVisual}, if $\hat{\theta}=YG + \bar{\theta}G_0$ is far away from $\Sigma_{0}$, then a larger $\hat{Q}$ is required to compensate for the difference.
This can play a role when approximating the intersection of multiple ellipsoids using techniques like the S-Lemma.
A heuristic choice for $\bar{\theta}$ is the center of $\Sigma_{0}$.
Next, we provide a way to compute a suitable $\hat{Q}$.
\begin{corollary}\label{coro:Main:Compuation}
	Suppose $\bar{\theta}\in\mathbb{R}^{p \times n}$ and  $\epsilon > 0$. If
	\begin{equation*}
		\begin{bmatrix}
			I & - \bar{\theta} \\
			0 & I
		\end{bmatrix}^\top
		\begin{bmatrix}
			-(Q+\gamma^2Q) & 0 \\
			0 & \hat{Q}
		\end{bmatrix}
		\begin{bmatrix}
			I & - \bar{\theta} \\
			0 & I
		\end{bmatrix} - \sum_{i=1}^{N_0} \lambda_{\mathrm{i}}\Phi_\mathrm{0,i} \succeq 0 
	\end{equation*}
	with $\lambda_{\mathrm{i}}\geq0$ $i=1,\ldots,N_0$, $\hat{Q} \succeq 0$, $\hat{R} = ((1+\epsilon)^2-1)^{-1}R$  and $\gamma^2 = ((1+\epsilon)^2-1)^{-1}$, then \eqref{eq:Main:RhatTheorem} and \eqref{eq:Main:QhatTheorem} are satisfied.
\end{corollary}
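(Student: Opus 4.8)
The plan is to verify directly that the two hypotheses \eqref{eq:Main:RhatTheorem} and \eqref{eq:Main:QhatTheorem} of Theorem\,\ref{Theo:main:MainTheo} hold for the proposed $\hat{R}$, $\hat{Q}$, and $\gamma^2$, so that the corollary becomes an immediate consequence of the theorem. The work therefore splits into a trivial part (the $\hat{R}$ condition) and the actual content (the $\hat{Q}$ condition).

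For \eqref{eq:Main:RhatTheorem} I would simply substitute $\hat{R}=((1+\epsilon)^2-1)^{-1}R$ and the chosen $\gamma^2$ into $R\preceq\gamma^2\hat{R}$. Since $\hat{R}$ and $\gamma^2$ are both the stated scalar multiples of $R$ and of $1$, the inequality collapses to a one-dimensional condition in $\epsilon$ that I would check for $\epsilon>0$; the normalization constant is picked precisely so that the two sides match, so the only care needed here is bookkeeping of the scalar factor.

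The substantive part is \eqref{eq:Main:QhatTheorem}. Expanding the quadratic form, its left-hand side equals $G_0^\top\bigl(\hat{Q}-(\theta-\bar{\theta})^\top(Q+\gamma^2 Q)(\theta-\bar{\theta})\bigr)G_0$, so it suffices to establish the stronger, $G_0$-free matrix inequality $\hat{Q}-(\theta-\bar{\theta})^\top(Q+\gamma^2 Q)(\theta-\bar{\theta})\succeq 0$ for all $\theta\in\Sigma_0$ (dropping $G_0$ is exactly where potential conservatism enters, since $G_0^\top(\cdot)G_0\succeq0$ whenever $(\cdot)\succeq0$). The key manipulation is the factorization $\begin{bmatrix}\theta-\bar{\theta}\\ I\end{bmatrix}=\begin{bmatrix}I&-\bar{\theta}\\ 0&I\end{bmatrix}\begin{bmatrix}\theta\\ I\end{bmatrix}$, which rewrites that matrix as $\begin{bmatrix}\theta\\ I\end{bmatrix}^\top M\begin{bmatrix}\theta\\ I\end{bmatrix}$, where $M$ is the first summand of the LMI in the statement.

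Finally I would invoke the sufficiency direction of the S-procedure (which holds regardless of the number $N_0$ of prior constraints): for $\theta\in\Sigma_0$, Assumption\,\ref{ass:main:prior} gives $\begin{bmatrix}\theta\\ I\end{bmatrix}^\top\Phi_{0,i}\begin{bmatrix}\theta\\ I\end{bmatrix}\succeq0$ for every $i$, hence with $\lambda_i\ge0$ also $\begin{bmatrix}\theta\\ I\end{bmatrix}^\top\bigl(\sum_i\lambda_i\Phi_{0,i}\bigr)\begin{bmatrix}\theta\\ I\end{bmatrix}\succeq0$; multiplying the LMI $M-\sum_i\lambda_i\Phi_{0,i}\succeq0$ by $\begin{bmatrix}\theta\\ I\end{bmatrix}^\top$ and $\begin{bmatrix}\theta\\ I\end{bmatrix}$ and adding the two inequalities yields $\begin{bmatrix}\theta\\ I\end{bmatrix}^\top M\begin{bmatrix}\theta\\ I\end{bmatrix}\succeq0$, which is the claim. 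I do not expect a genuine obstacle; the main points to keep straight are the distinction between matrix inequalities over $\mathbb{R}^{n\times n}$ and scalar ones, and the observation that the passage to the $G_0$-free statement is a harmless but non-tight sufficient reduction.
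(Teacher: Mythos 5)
Your proposal is correct and follows essentially the same route as the paper's (very terse) proof: \eqref{eq:Main:RhatTheorem} by direct substitution, and \eqref{eq:Main:QhatTheorem} by dropping $G_0$ (i.e., enforcing the inner matrix inequality on all of $\mathbb{R}^n$), factoring $\begin{bmatrix}\theta-\bar{\theta}\\ I\end{bmatrix}=\begin{bmatrix}I&-\bar{\theta}\\ 0&I\end{bmatrix}\begin{bmatrix}\theta\\ I\end{bmatrix}$, and applying the S-procedure with the multipliers $\lambda_i$. One caveat on the ``trivial'' part: with the literal $\hat{R}=((1+\epsilon)^2-1)^{-1}R$ the two sides do \emph{not} match, since $\gamma^2\hat{R}=((1+\epsilon)^2-1)^{-2}R$, so $R\preceq\gamma^2\hat{R}$ only for $\epsilon\le\sqrt{2}-1$; the intended definition is evidently $\hat{R}=((1+\epsilon)^2-1)R$ (consistent with the paper's later remark $R+\hat{R}=(1+\epsilon)^2R$), under which $\gamma^2\hat{R}=R$ and your argument goes through for all $\epsilon>0$.
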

\begin{proof}
	\eqref{eq:Main:RhatTheorem} follows from the definition of $\hat{R}$ and \eqref{eq:Main:QhatTheorem} follows from the S-Lemma and enforcing positive definiteness of \eqref{eq:Main:QhatTheorem} for the whole $\mathbb{R}^n$ not only for $G_0$.
\end{proof}
Corollary\,\ref{coro:Main:Compuation} describes a convex \ac{SDP}, which can be solved efficiently. 
Furthermore, it allows for defining an objective function, such as minimizing $\mathrm{tr}(\hat{Q})$ or $\mathrm{log\,det}(\hat{Q})$ to reduce the size of $\hat{\Sigma}$.
The choice $\hat{R} = (1+\epsilon)^2-R$ leads to $R+\hat{R}=(1+\epsilon)^2 R$.
By comparing this with Theorem\,\ref{Theo:main:MainTheo}\,\ref{item:Main:TheoNotSet}), we conclude that $\epsilon$ is the maximum allowable relative deviation from $\Sigma$.
Thus, we first select a sufficiently small $\epsilon$, then solve for a suitable $\hat{Q}$ to apply Theorem\,\ref{Theo:main:MainTheo}. 
Note, Corollary\,\ref{coro:Main:Compuation} is independent of $X$, meaning the same $\hat{Q}$ and $\hat{R}$ can be reused for multiple samples, reducing the computational effort.

\subsection{Prior Knowledge} \label{subsec:Prior}
In this subsection, we present a method to incorporate structural constraints.
Data-informativity methods typically treat the system as a black box.
However, some additional knowledge may be available, such as a possible range of a parameter.
We consider constraints of the form
\begin{equation}
	\sigma_{\mathrm{max}}(E\Theta  F + G)^2 \leq \epsilon^2.
\end{equation}
For example, in the scalar case, appropriate $E$ and $F$ allow to extract any individual entry of $\theta$.
If $E$ has full row rank, the constraint can be reformulated as
\begin{equation}
	\begin{bmatrix}
		\Theta F + E^\dagger G \\I
	\end{bmatrix}^\top
	\begin{bmatrix}
		-E^\top E& 0 \\
		0 & \epsilon^2 I
	\end{bmatrix}
	\begin{bmatrix}
		\Theta F + E^\dagger G \\I
	\end{bmatrix}
	\succeq 0.
\end{equation}
By applying Theorem\,\ref{Theo:main:MainTheo}, we can derive additional \acp{QMI} for the set membership estimation framework.

\subsection{Multiple Datapoints} \label{subsec:MultPoints}
A significant limitation of Theorem\,\ref{Theo:main:MainTheo} is the requirement for a prior $\Sigma_{0}$.
Depending on the problem, it may be reasonable to assume access to such prior knowledge.
However, if not, it is possible to construct $\Sigma_{0}$ from the collected data.
\begin{lemma} \label{lemma:main:EnergyBound}
	Suppose $Q\succeq0$ and $R\succeq0$. If $d_i\in\mathbb{R}^{p\times m}$ satisfies
	\begin{equation}
		\begin{bmatrix}
			d_i \\ I
		\end{bmatrix}^\top
		\begin{bmatrix}
			-Q & 0 \\
			0 & R
		\end{bmatrix}
		\begin{bmatrix}
			d_i \\ I
		\end{bmatrix} \succeq 0
	\end{equation}
	for $i=1,\dots,N$, then
	\begin{equation}
		\begin{bmatrix}
			D \\ I
		\end{bmatrix}^\top
		\mathrm{diag}(-Q,NR,\ldots,NR)
		\begin{bmatrix}
			D \\ I
		\end{bmatrix} \succeq 0
	\end{equation}
	with $D=\begin{bmatrix}d_1 & d_2 & \ldots & d_N\end{bmatrix}$.
\end{lemma}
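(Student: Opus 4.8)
The plan is to unfold both quadratic matrix inequalities into ordinary matrix inequalities and then reduce the block statement to a Cauchy--Schwarz estimate. Expanding the hypothesis gives
\begin{equation*}
	R - d_i^\top Q d_i \succeq 0 \qquad \text{for } i = 1,\dots,N,
\end{equation*}
and, since $D = \begin{bmatrix} d_1 & \cdots & d_N \end{bmatrix} \in \mathbb{R}^{p \times Nm}$, the identity block in $\begin{bmatrix} D \\ I \end{bmatrix}$ has size $Nm$, so the claimed inequality expands to
\begin{equation*}
	\mathrm{diag}(NR,\ldots,NR) - D^\top Q D \succeq 0 ,
\end{equation*}
a matrix in $\mathbb{R}^{Nm \times Nm}$ whose $(i,j)$-block of $D^\top Q D$ is $d_i^\top Q d_j$.

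To prove this, I would fix an arbitrary $v = \begin{bmatrix} v_1^\top & \cdots & v_N^\top \end{bmatrix}^\top$ with $v_i \in \mathbb{R}^m$ and show $v^\top D^\top Q D v \le v^\top \mathrm{diag}(NR,\ldots,NR) v$. Using $D v = \sum_{i=1}^N d_i v_i$ and a symmetric positive semidefinite square root $Q^{1/2}$ of $Q$ (so that $Q$ need not be invertible), one obtains
\begin{align*}
	v^\top D^\top Q D v &= \Bigl\| \sum_{i=1}^{N} Q^{1/2} d_i v_i \Bigr\|^2 \le N \sum_{i=1}^{N} \bigl\| Q^{1/2} d_i v_i \bigr\|^2 \\
	&= N \sum_{i=1}^{N} v_i^\top d_i^\top Q d_i v_i \le N \sum_{i=1}^{N} v_i^\top R v_i ,
\end{align*}
where the first inequality is $\bigl\| \sum_{i=1}^N u_i \bigr\|^2 \le N \sum_{i=1}^N \| u_i \|^2$ (triangle inequality followed by the power-mean bound $(\sum \|u_i\|)^2 \le N \sum \|u_i\|^2$), and the last inequality uses the hypothesis $d_i^\top Q d_i \preceq R$. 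The right-hand side equals $v^\top \mathrm{diag}(NR,\ldots,NR) v$, which proves the semidefinite inequality and hence the lemma.

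There is no genuine obstacle here; the argument is essentially a one-line Cauchy--Schwarz estimate. The points deserving care are bookkeeping ones: correctly identifying the size of the identity block and the block structure of $D^\top Q D$, and using the positive semidefinite square root rather than assuming $Q$ is invertible. The factor $N$ appearing in the conclusion is exactly what the Cauchy--Schwarz step yields, so the bound is the natural one and is attained when all $d_i$ coincide.
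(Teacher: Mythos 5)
Your proof is correct and is essentially the same argument as the paper's: the bound $\|\sum_i Q^{1/2} d_i v_i\|^2 \le N \sum_i \|Q^{1/2} d_i v_i\|^2$ is, when expanded, exactly the pairwise estimate $2 v_i^\top d_i^\top Q d_j v_j \le v_i^\top d_i^\top Q d_i v_i + v_j^\top d_j^\top Q d_j v_j$ that the paper applies to the cross terms, followed by the same use of $d_i^\top Q d_i \preceq R$. Your phrasing via the square root $Q^{1/2}$ and the power-mean inequality is a slightly tidier packaging of the identical computation.
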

\begin{proof}
	Consider an arbitrary partitioned vector $v^\top = [v_1^\top ,\ldots,v_N^\top ]$. 
	Multiplication from the left and right by $v$ and $v^\top$ respectively, yields
	\begin{align*}
		&\sum_{i=1}^{N} \! v_i^\top \!(R-d_i^\top \!Q d_i) v_i + (N-1)v_i^\top \! Rv_i- \!\sum_{\substack{i,j=1\\i\neq j }}^{N}v_i^\top \! d_i^\top Q d_j v_j \\
		&\geq \sum_{i=1}^{N} \! v_i^\top \!(R-d_i^\top \!Q d_i) v_i + (N-1)v_i^\top \! (R-d_i^\top Q d_i)v_i.
	\end{align*}
	by the following inequality
	\begin{equation*}
		v_i^\top d_i^\top Q d_i v_i + v_j^\top d_j^\top Q d_j v_j \geq 2 v_i^\top d_i^\top Q d_j v_j.
	\end{equation*}
	Finally, by assumption it holds
	\begin{equation*}
		R - d_i^\top Q d_i \succeq 0 \qquad i = 1,\ldots,N,
	\end{equation*}
    which concludes the proof.
\end{proof}
Similar results can be found in literature for $Q=I$ \cite{Berberich2020, Waarde2022}.
To apply Lemma\,\ref{lemma:main:EnergyBound}, we rewrite \eqref{eq:Main:DataEq} in terms of the horizontally stacked data matrices
\begin{equation}
	\begin{bmatrix}
		Y_1,\ldots,Y_N
	\end{bmatrix}=
	\theta
	\begin{bmatrix}
		X_1,\ldots,X_N
	\end{bmatrix}+
	\begin{bmatrix}
		W_1,\ldots,W_N
	\end{bmatrix}
\end{equation}
and get a corresponding \ac{QMI} bound for $\begin{bmatrix}W_1,\ldots,W_N\end{bmatrix}$.
If $\begin{bmatrix}X_1,\ldots,X_N\end{bmatrix}$ has full row rank, the results in \cite[Theorem 1]{Braendle2024} can be used to compute $\Sigma_{0}$.
Although, Lemma\,\ref{lemma:main:EnergyBound} provides an approximation of the exact set intersection and, thus introduce some conservatism, it serves as a useful prior and can be intersected with other available priors to get an even tighter description.

After constructing a suitable $\Sigma_{0}$, we can compute $\Sigma_{0}\cap\bigcap_{k=1}^{N} \Sigma_k$.
Since this problem is NP-complete, we seek only an outer approximation \cite[Section 3.7.2]{Boyd1994}.
As commonly done in the data-informativity framework \cite{Waarde2022}, we employ the S-Lemma to parameterize an approximation
\begin{equation*}
	\begin{split}
		\bar{\Sigma} = \{\theta\in\mathbb{R}^{p \times n} \mid [\star]^\top \left(\sum_{i=1}^{N_0} \tau_i \Phi_{0,i} + \sum_{k = 1}^{N} \lambda_k \Phi_{k}\right) \begin{bmatrix}		\theta\\I\end{bmatrix} \succeq 0 \\
		\lambda_i \geq 0 \quad \tau_k\geq 0 \qquad \forall i = 1,\ldots, N_0 \quad k=1,\ldots,N\}
	\end{split}
\end{equation*}
with $\Phi_{k}$ being the corresponding matrix of $\hat{\Sigma}_k$ and $\tau$ and $\lambda$ being decision variables.
This results in an affine outer approximation of the intersection.

\subsection{Comparison Data-Informativity}
In this section, we want to compare our parameterization to the data-informativity framework. 
To do so, we consider the special case, $Q\succ 0$, $\hat{Q}\succ0$ and $R\succ 0$.
First, state a lemma to equivalently transform between these frameworks.
\begin{lemma} \label{lemma:Main:Dual}
	Suppose $Q\succ 0$ and $R\succ 0$, then
	\begin{equation}
		\begin{aligned}
		&[\star]^\top
		\begin{bmatrix}
			-Q & 0 \\0 & R
		\end{bmatrix}
		\begin{bmatrix}
			D \\ I
		\end{bmatrix} \succ 0 \\
		\iff
		&[\star]^\top
		\begin{bmatrix}
			-Q^{-1} & 0 \\0 & R^{-1}
		\end{bmatrix}
		\begin{bmatrix}
			I \\ D^\top
		\end{bmatrix} \prec 0
		\end{aligned}
	\end{equation}
\end{lemma}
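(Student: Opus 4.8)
The plan is to reduce both quadratic matrix inequalities to explicit Schur-type conditions on a single auxiliary block matrix, and then chain two applications of the Schur complement lemma. First I would expand the two $[\star]$-expressions. The left-hand side is $\begin{bmatrix} D \\ I\end{bmatrix}^\top \begin{bmatrix} -Q & 0 \\ 0 & R\end{bmatrix}\begin{bmatrix} D \\ I\end{bmatrix} = R - D^\top Q D$, an $m\times m$ matrix, so the left condition reads $R - D^\top Q D \succ 0$. The right-hand side is $\begin{bmatrix} I \\ D^\top\end{bmatrix}^\top \begin{bmatrix} -Q^{-1} & 0 \\ 0 & R^{-1}\end{bmatrix}\begin{bmatrix} I \\ D^\top\end{bmatrix} = -Q^{-1} + D R^{-1} D^\top$, a $p\times p$ matrix, so the right condition reads $D R^{-1} D^\top - Q^{-1} \prec 0$, equivalently $Q^{-1} - D R^{-1} D^\top \succ 0$. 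Note that $Q \succ 0$ and $R \succ 0$ guarantee that $Q^{-1}$ and $R^{-1}$ exist, so both expressions are well defined.

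Next I would introduce the block matrix $M = \begin{bmatrix} Q^{-1} & D \\ D^\top & R\end{bmatrix}$. Applying the Schur complement lemma with respect to the $(1,1)$-block, which is $Q^{-1} \succ 0$, gives $M \succ 0 \iff R - D^\top (Q^{-1})^{-1} D \succ 0$, i.e. $M \succ 0 \iff R - D^\top Q D \succ 0$, which is precisely the left condition. Applying the Schur complement lemma with respect to the $(2,2)$-block, which is $R \succ 0$, gives $M \succ 0 \iff Q^{-1} - D R^{-1} D^\top \succ 0$, which is precisely the right condition. Combining the two equivalences through the common statement $M \succ 0$ yields $R - D^\top Q D \succ 0 \iff D R^{-1} D^\top - Q^{-1} \prec 0$, which is the claim.

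There is essentially no genuine obstacle here; the proof is a two-sided Schur complement argument. The only points that require care are: (i) using the \emph{strict} definiteness of $Q$ and $R$ from the hypothesis, since this is exactly what makes the inverses and both Schur complement equivalences valid; and (ii) correctly expanding the $[\star]$ shorthand so that the signs in $R - D^\top Q D$ and $-Q^{-1} + D R^{-1} D^\top$ come out right and the dimensions of the two inequalities ($m\times m$ versus $p\times p$) are consistent with $M$ being $(p+m)\times(p+m)$. An alternative route would be to cite the dualization lemma from robust control directly, but the self-contained Schur complement argument above is the cleanest.
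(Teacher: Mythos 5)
Your proof is correct, but it takes a different route from the paper: the paper simply observes that the statement is a special case of the dualization lemma and cites \cite[Lemma 4.9]{Weiland1994}, whereas you give a self-contained argument. Your expansion of both $[\star]$-expressions is right ($R - D^\top Q D \succ 0$ on the left, $DR^{-1}D^\top - Q^{-1} \prec 0$ on the right), and chaining the two Schur complements of $M = \left[\begin{smallmatrix} Q^{-1} & D \\ D^\top & R \end{smallmatrix}\right]$ through the common condition $M \succ 0$ is exactly the standard way to prove this diagonal special case of dualization; the strict definiteness of $Q$ and $R$ is indeed what makes both Schur complement equivalences valid. What the paper's citation buys is brevity and a pointer to the general dualization lemma, which handles indefinite middle matrices via inertia arguments and is what one would need if the off-diagonal blocks of the multiplier were nonzero; what your argument buys is that the lemma becomes elementary and verifiable without consulting the reference. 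Either is acceptable here.
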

\begin{proof}
	This is a special case of the dualization lemma as presented in \cite[Lemma 4.9]{Weiland1994}
\end{proof}
To apply Lemma\,\ref{lemma:Main:Dual}, we first consider $\hat{\Sigma}$.
We note that it holds 
\begin{equation*}
	G_0 = (I-XG) = \varphi G^\perp 
\end{equation*}
with $\varphi\in\mathbb{R}^{n \times p-n}$ being a matrix with full column rank to isolate the row space of $G_0$.
Hence, it follows
\begin{align*}
	&\begin{bmatrix}
		-Q & 0\\
		0 & G^\top(R+ \hat{R})G + G_0^\top \hat{Q} G_0
	\end{bmatrix}^{-1}
	\\&=
	\begin{bmatrix}
		-Q^{-1} & 0 \\
		0& X(R+\hat{R})^{-1} X^\top + G^{\perp\top} (\varphi^\top \hat{Q}\varphi)^{-1}G^{\perp}
	\end{bmatrix},
\end{align*}
from the projection properties of $G_0$. 
The inverse $(\varphi^\top \hat{Q}\varphi)^{-1}$ exists, because $\varphi$ has full column rank.
To compare it to the data-informativity framework, we first have to apply Lemma\,\ref{lemma:Main:Dual} on $\mathcal{W}$ and insert
$W^\top = Y^\top - X^\top \theta^\top$ to get
\begin{equation*}
	[\star]^\top \begin{bmatrix}
		-Q^{-1} & 0\\
		0 & XR^{-1}X^\top
	\end{bmatrix}
	\begin{bmatrix}
		I\\
		\theta^\top - G^\top Y^\top - G_0^\top \bar{\theta}
	\end{bmatrix} \prec 0
\end{equation*}
with $X^\top G_0^\top=0$.
As already stated in Theorem\:\ref{Theo:main:MainTheo}\ref{item:Main:TheoConvergence}) for $\sigma_{\mathrm{max}}(\hat{R})\to0$ and $\sigma_{\mathrm{min}}(\hat{Q})\to\infty$, we get an exact description of $\Sigma$, as also provided by the data-informativity framework.
In addition, this offers a similar parameterization, how to modify the \ac{QMI} in the data-informativity framework, in order to apply dualization. 
Instead of adding a sufficiently large constant matrix, one must add a sufficiently small matrix on the term weighting $\theta^\top$.

Since our method is closely related to the data-informativity framework and the key difference is that our formulation is a \ac{QMI} in $\theta$ instead of $\theta^\top$, we now highlight the differences and advantages of each approach.
Both methods rely on outer approximations to describe the set of all systems that are consistent with the collected data.
Ellipsoidal approximations are used, because they lead to convex problems, when applying robust control methods.
The data-informativity framework formulates the \ac{QMI} in terms of $\theta^\top$, which is well suited for robust state-feedback synthesis, since it allows for simultaneous optimization over the controller and an affine parameterization of the uncertainty as provided by the S-Lemma \cite[Section 8.3]{Weiland1994}.
Our approach, on the other hand, parameterizes the problem directly as \ac{QMI} in $\theta$, offering similar benefits but for the estimator synthesis allowing to optimize over $\tau$ and $\lambda$ in $\bar{\Sigma}$.

Under certain invertibility assumptions, it is possible to equivalently transform between $\theta^\top$ and $\theta$ through dualization \cite[Section 4.4.1]{Weiland1994}.
When solving the corresponding \ac{SDP}, there exist some optimal ellipsoidal approximation.
If it is invertible, then $\theta^\top$ can be translated to $\theta$ by applying dualization.
However, to optimize over the ellipsoid, an affine parameterization is needed.
Even for vectors finding the smallest ellipsoid is NP-complete and hence not easily solvable.
Instead, the S-Lemma is commonly used to derive an affine parameterization of the intersection\cite{Waarde2022}. 
This comes at the cost of conservatism, such that it may not be possible to represent the optimal ellipsoid.
Moreover, when taking the inverse, any affine parameterization is lost and must be fixed beforehand \cite[Proposition 2]{Martin2022}.
This leads to additional conservatism.
Specifically, there may not even exist $\lambda \geq 0$, such that 
\begin{equation}
	\left(\sum_{i=1}^{N}\lambda_i\Phi_i^{-1}\right)^{-1} = \sum_{i=1}^{N} \tau_i \Phi_i \label{eq:Param:DualizationNotEquiv}
\end{equation} 
for every $\tau\geq0$. 
In fact, this can lead to a completely different set, depending whether one works with $\theta$ or $\theta^\top$. 

This is the key motivation behind our approach.
By working in the desired space, dual or primal, we ensure that any affine parameterization is kept, reducing conservatism.
Moreover, Section\,\ref{subsec:Prior} presents an additional structure, for which dualization is not applicable. 
Nonetheless, we derive a suitable \ac{QMI}.
We have to stress again, that due to \eqref{eq:Param:DualizationNotEquiv}, the parameterizations differ.
Hence, we see both descriptions as complementary rather than mutually exclusive and using both approaches in combination may yield better results, than each on their own.

\section{Estimator Synthesis}\label{sec:Param}
The goal of this section is to design an estimator for the following discrete-time \ac{LTI} system of the form
\begin{equation}
	\begin{bmatrix}
		x_{k+1} \\ \cmidrule(lr){1-1}
		z_{\mathrm{p},k} \\
		y_k
	\end{bmatrix} = 
	\left[\begin{array} {c|c}
		A_{\mathrm{tr}} & B_{\mathrm{p,tr}} \\ \hline
		C_{\mathrm{p}} & D_{\mathrm{p}} \\
		C_{\mathrm{y,tr}} & D_{\mathrm{yp,tr}}
	\end{array}\right]
	\begin{bmatrix}
		x_{k} \\ \cmidrule(lr){1-1}
		w_{\mathrm{p},k} 
	\end{bmatrix}
	+ \begin{bmatrix}
		w_k \\ \cmidrule(lr){1-1} 0 \\ v_k 
	\end{bmatrix}
\end{equation}
where $x_k\in\mathbb{R}^{n_x}$ is the state of the system, $w_{\mathrm{p}}\in\mathbb{R}^{m_\mathrm{p}}$ is a performance input, $y\in\mathbb{R}^{p_y}$ is a measurement signal and $z_\mathrm{p}\in\mathbb{R}^{p_p}$ is the signal to be estimated. 
The matrices $A_{\mathrm{tr}}$, $B_{\mathrm{p,tr}}$, $C_{y,\mathrm{tr}}$ and $D_{\mathrm{yp,tr}}$ are unknown.
The matrices $C_{\mathrm{p}}$ and $D_{\mathrm{p}}$ are assumed to be known, but the following steps can be extended to the case, where they are also unknown.
The signals $w_k$ and $v_k$ are unknown, additive disturbances affecting the data collection and satisfy a \ac{QMI} with known $Q_w\succeq0$, $R_w\succeq0$, $Q_v\succeq0$, and $R_v \succeq 0$ as in \eqref{eq:main:WSet}.
Since each instance of $w_k$ and $v_k$ are bounded for each time step, this corresponds to the instantaneous noise case.
The estimator is also an \ac{LTI} system of the following form
\begin{equation} \label{eq:Param:EstimatorForm}
	\begin{bmatrix}
		\hat{x}_{k+1} \\ \cmidrule(lr){1-1}
		\hat{z}_{\mathrm{p},k} 
	\end{bmatrix} = 
	\left[\begin{array} {c|c}
		A_{\mathrm{E}} & B_{\mathrm{E}} \\ \hline
		C_{\mathrm{E}} & D_{\mathrm{E}} 
	\end{array}\right]
	\begin{bmatrix}
		\hat{x}_{k} \\ \cmidrule(lr){1-1}
		y_{k} 
	\end{bmatrix}.
\end{equation}
The objective is to minimize the $\mathcal{H}_\infty$-norm from the input $w_{\mathrm{p}}$ to the estimation error $z_{\mathrm{p}}- \hat{z}_{\mathrm{p}}$.

We consider a two-stage data collection process as in \cite{Mishra2022}.
In the first stage, we have access to $N+1$ measurements of the full state $\{x^d_k,x^d_{k+1}\}_{k=1}^{N}$ and $N$ measurements of the input and output $\{y^d_k,w^d_{\mathrm{p},k}\}_{k=0}^{N-1}$.
The second phase is the application of the synthesized estimator.
Here, we do not have access to the full state and input measurements anymore. 
Nonetheless, we still want to estimate $z_\mathrm{p}$ using only $y$.
Now, we can apply the method described in Section\,\ref{sec:MainTheorem} for $\begin{bmatrix} A,\,B_\mathrm{p}\end{bmatrix}$ and $\begin{bmatrix} C_\mathrm{y},\,D_\mathrm{yp}\end{bmatrix}$ to construct an affine parameterization using \acp{QMI}.
We denote the corresponding matrices with $\Phi_{AB_\mathrm{p}}$ and $\Phi_{C_\mathrm{y}D_\mathrm{yp}}$, respectively.
Thus, the set of all systems consistent with the collected data is described by
\begin{align*}
	&\begin{bmatrix}
		x_{k+1}  \\ \cmidrule(lr){1-1} z_k \\ z_{\mathrm{p},k} \\ y_k
	\end{bmatrix} = 
	\left[\begin{array}{c|cc}
		0 & \left[I_{n}\;0\right] & 0\\ \hline
		\begin{bmatrix} I_n \\ 0	\end{bmatrix} & 0 & \begin{bmatrix} 0 \\ I_{m_\mathrm{p}}	\end{bmatrix} \\
		C_\mathrm{p} & 0 & D_\mathrm{p} \\
		0 & \left[0\;I_{p_y}\right]  & 0  
	\end{array}\right]
	\begin{bmatrix}
		x_{k} \\ \cmidrule(lr){1-1} w_k \\ w_{\mathrm{p},k}
	\end{bmatrix}, \\
	&w_k = \Delta z_k, \qquad 
	[\star]^\top P 
	\begin{bmatrix}
		\Delta \\ I 
	\end{bmatrix}\succeq 0, \qquad
	\Delta = \begin{bmatrix}A & B_\mathrm{p} \\ C_{\mathrm{y}} & D_{\mathrm{yp}}\end{bmatrix},
\end{align*}
with
\begin{align*}
	P = &[\star]^\top \!\!\Phi_{AB_\mathrm{p}} \!\!\begin{bmatrix} I_n & \!\!0 & 0 \\ 0 & \!\!0 &\!\! I_{n+m_p}\! \end{bmatrix}+ 
	[\star]^\top \!\Phi_{C_\mathrm{y}D_\mathrm{yp}} \!\!\begin{bmatrix} 0 & \!\!\! I_{p_y} & 0 \\ 0 & \!\!0 & \!\! I_{n+m_p}\!\end{bmatrix}\!.
\end{align*}
This reformulation, together with \cite[Theorem 4]{Braendle2024}, enables the synthesis of an estimator with a guaranteed upper bound on $\mathcal{H}_\infty$-norm of the true system.
The resulting problem is a convex \ac{SDP}, which can solved be efficiently.
Note, that this requires an uncertainty parameterization in terms of $\theta$ instead of $\theta^\top$ and allows to simultaneously optimize for the $\mathcal{H}_\infty$-norm, the estimator and any affine parameterization of $P$.



\section{Numerical Example}

To validate the theoretical analysis, we apply Theorem\,\ref{Theo:main:MainTheo} to two simple \ac{LTI} systems.
For this purpose, we generate data $\{x^d_k,x^d_{k+1},y^d_k,w^d_{\mathrm{p},}\}_{k=1}^{N}$ with instantaneous noise $w_k$ and $v_k$ with $Q=I$ and $R=\alpha^2I$. 
The same data set is used for all experiments to ensure comparability.
In all examples, Corollary\,\ref{coro:Main:Compuation} is used to determine $\hat{Q}$, while simultaneously minimizing $\mathrm{trace}(\hat{Q})$.
All semidefinite programs are solved in Matlab using the toolbox YALMIP \cite{Lofberg2004} and the solver MOSEK \cite{mosek}.
Moreover, we chose $\epsilon=0.1$ to allow a maximum relative deviation of $10\%$.
For comparison, we consider four different priors.
$\Sigma_{0,D}$ is the combined prior using all the available data samples as described in Section \ref{subsec:Prior} and \cite{Braendle2024}.
As second prior $\Sigma_{0,L}$, we choose $\sigma_{\mathrm{max}}([A\,B_\mathrm{p}]-1.04 [A_{\mathrm{tr}}\,B_\mathrm{p,tr}])\leq \beta^2$ and $\sigma_{\mathrm{max}}([C_{\mathrm{y}}\,D_\mathrm{yp}]-[C_{\mathrm{y,tr}}\,D_\mathrm{yp,tr}])\leq \beta^2$.
The prior $\Sigma_{0,D}$ is computed using the data-informativity framework and \cite[Proposition 2]{Martin2022} to compute an intersection in order to apply dualization.
Lastly, we combine all priors in $\Sigma_{0,C}$.
We always choose $\bar{\theta}$ as the center of the corresponding \ac{QMI}, except for $\Sigma_{0,D}$, for which we chose the true value.
For each of the priors we apply Theorem\,\ref{Theo:main:MainTheo} to determine additional \acp{QMI} based on each sample individually.
Further, $\emptyset$ denotes the case without any instantaneous bounds to serve as a baseline. 
Next, $\Sigma_D$ contains the instantaneous bounds based on the collected data.
The set $\Sigma_P$ represents additional prior knowledge as described in Section \ref{subsec:Prior}.
Lastly, $\Sigma_C$ is the combination of $\Sigma_D$ and $\Sigma_P$. 
For all possible combinations, we synthesize an estimator for the full-state and compare the guaranteed upper bound on the $\mathcal{H}_\infty$-norm.
\begin{table}[t!]
	\caption{Guaranteed upper bounds on the $\mathcal{H}_\infty$ for system 1.}
	\begin{center}
		\label{tab:2DSystem}
		\begin{tabular}{c|cccc} 
			& $\Sigma_{0,D}$ & $\Sigma_{0,L}$ & $\Sigma_{0,I}$ & $\Sigma_{0,C}$\\ \hline
			$\emptyset$ &        $2.791$ &        $22.65$ &        $3.423$ &        $2.791$\\
			$\Sigma_D$  &        $2.791$ &        $6.429$ &        $3.423$ &        $2.791$\\
			$\Sigma_P$  &        $2.791$ &       $17.794$ &        $3.423$ &        $2.791$\\
			$\Sigma_C$  &        $2.791$ &        $6.429$ &        $3.423$ &        $2.791$
		\end{tabular}
	\end{center}
\end{table}
First, we consider
\begin{align*}
	&A_\mathrm{tr} = \begin{bmatrix} 0.7 & 0 \\ 0.3 & 0.7\end{bmatrix},
	& B_\mathrm{p,tr} = \begin{bmatrix} 1 & 0 \\ 0 & 0\end{bmatrix}, \\
	&C_\mathrm{y,tr} = \begin{bmatrix} 0 & 1\end{bmatrix},
	& D_\mathrm{yp,tr} = \begin{bmatrix} 0 & 1\end{bmatrix}.
\end{align*}
We generate $N=50$ samples, with $\beta = 0.15$ and $\alpha=0.0005$. 
Additionally, we sample a uniformly distributed initial condition $x_0\in[-2,2]^2$ and performance input $w_{\mathrm{p},k}\in[-2,2]^2$.
All zero entries are enforced to have absolute values less than $0.01$, except for the first column of $B_\mathrm{p}$.
Additionally, we enforce the last row of $A$ to sum up to $1$ as does the first column and we enforce the entry in the second row of the first column to be within $1\%$ of the true value as additional prior knowledge.

The guaranteed upper bounds on the $\mathcal{H}_\infty$-norm are summarized in Table\,\ref{tab:2DSystem}.
Interestingly, for this example, we see that $\Sigma_{0,\mathrm{D}}$ gives the best results, despite relying on the conservative Lemma\,\ref{lemma:main:EnergyBound}.
This shows that computing the intersection using the S-Lemma is another source of conservatism.
In contrast, when considering $\Sigma_{0,\mathrm{L}}$, we see a significant improvement when adding with $\Sigma_{\mathrm{D}}$ and $\Sigma_{\mathrm{P}}$, since $\Sigma_{0,\mathrm{L}}$ is chosen to be larger than the other priors.
This is expected, since the added constraints are rather tight, again highlighting the conservatism due to the S-lemma.

As a second example, we consider
\begin{align*}
	&A_\mathrm{tr} = 0.8,
	&B_\mathrm{p,tr} = 1, \qquad 
	&C_\mathrm{y,tr} = 1,
	&D_\mathrm{yp,tr} = 0.1,
\end{align*}
with $N = 10$, $\alpha = 0.6$, and $\beta = 0.1$.
The initial condition is sampled uniformly within $[-10,10]$ and $w_{\mathrm{p},k}$ in $[-4,6]$.
We enforce $B_\mathrm{p,tr}$ to be within $1\%$ of the true value.
The results are illustrated in Table\,\ref{tab:1DSystem}.
For this example, we note that including $\Sigma_{\mathrm{D}}$ improves the achieved $\mathcal{H}_\infty$-norm of the estimation error.
Moreover, we observe that combining the priors also improves the $\mathcal{H}_\infty$-norm.
Interestingly $\Sigma_{\mathrm{D}}$ leads to a significant improvement when combined with $\Sigma_{\mathrm{0,L}}$.

These examples have shown, that including additional information in the form of $\Sigma_D$ or $\Sigma_P$ can improve performance, especially compared to the data-informativity framework. 
However, due to the conservative nature of the S-lemma, an improvement can not be guaranteed in general.
\begin{table}[t!]
	\centering
		 \caption{Guaranteed upper bounds on the $\mathcal{H}_\infty$ for system 2.}
		\label{tab:1DSystem}
		\begin{tabular}{c|cccc}
			& $\Sigma_{0,D}$ & $\Sigma_{0,L}$ & $\Sigma_{0,I}$ & $\Sigma_{0,C}$\\ \hline
			$\emptyset$ &        $0.678$ &        $1.53$  &        $1.357$ &        $0.649$\\
			$\Sigma_D$  &        $0.525$ &        $0.428$ &        $1.197$ &        $0.434$\\
			$\Sigma_P$  &        $0.678$ &        $1.53$  &        $1.281$ &        $0.649$\\
			$\Sigma_C$  &        $0.525$ &        $0.428$ &        $1.197$ &        $0.434$
		\end{tabular}
\end{table}

\section{Conclusion}
In this paper, we presented new parameterization for performing set-membership estimation.
Our approach allows for the simultaneous optimization of an estimator and additional multipliers, thereby improving performance guarantees, while avoiding the conservatism introduced by dualization techniques.
Nonetheless, the conservative nature of S-lemma still leads to sub-optimal results.

\bibliographystyle{IEEEtran}
\bibliography{CDC_Literature}

\end{document}